\newtheorem{theorem}{Theorem}[section]
\newtheorem{lemma}[theorem]{Lemma}
\newenvironment{proof}[1][Proof]{\begin{trivlist}
\item[\hskip \labelsep {\bfseries #1}]}{\end{trivlist}}
\newenvironment{definition}[1][Definition]{\begin{trivlist}
\item[\hskip \labelsep {\bfseries #1}]}{\end{trivlist}}
\newcommand{\qed}{\nobreak \ifvmode \relax \else
      \ifdim\lastskip<1.5em \hskip-\lastskip
      \hskip1.5em plus0em minus0.5em \fi \nobreak
      \vrule height0.75em width0.5em depth0.25em\fi}
\begin{document}

\author{D. M. Walsh\thanks{E-mail
: darraghmw@gmail.com}\\ \small Economics
Department,\\ \small Trinity College Dublin,
Dublin 2, Ireland}

\title{On the stability of solutions of the Lichnerowicz-York equation}
%\date{}
\maketitle

\maketitle
\begin{abstract}
We study the stability of solution branches for the Lichnerowicz-York equation at moment of time symmetry with constant unscaled energy density. We prove that the weak-field lower branch of solutions is stable whilst the upper branch of strong-field solutions is unstable. The existence of unstable solutions is interesting since a theorem by Sattinger proves that the sub-super solution monotone iteration method only gives stable solutions. 

\end{abstract}

\section{Introduction}
The loss of uniqueness is a  salient feature of many nonlinear systems and is often accompanied by a loss or ``exchange'' of stability as two solution branches meet. A number of recent papers have looked at the uniqueness of solutions of conformal formulations of the Einstein constraints.

Recent numerical work by Holst and Kungurtsev in \cite{Holst} has confirmed the assumption made in \cite{W} that the linearisation of the (constant) energy density moment of time symmetry Lichnerowicz-York equation has a one dimensional kernel and that a quadratic fold results giving two solutions for subcritical energy density $\rho<\rho_c$, one solution at the critical value $\rho=\rho_c$  and none for energy density greater than a  critical value $\rho>\rho_c$.

In \cite{Maxwell}, Maxwell studied the conformal formulations of the Einstein constraint equations without the usual constant mean curvature condition (CMC) so that the constraints do not decouple. One motivation for that study was to examine the uniqueness of solutions in the far-from-constant mean curvature regime now that existence results for this case exist (see \cite{HNT}, \cite{Maxwell2}). He found interesting non-existence and non-uniqueness results showing that the constraints are ill-posed beyond the non-uniqueness that is introduced when one couples the lapse fixing equation to the four constraint equations, as in the extended conformal thin sandwich (XCTS) formulation (see \cite{PY}, \cite{BOMP}, \cite{W}) and some constrained evolution schemes (see \cite{Rinne}, \cite{meudon} for a resolution of this scaling problem).

The inherent ill-posedness of conformal formulations of the constraints found in \cite{Maxwell} is worthy of further analysis. In this work we continue our bifurcation analysis of the Einstein constraints, begun in \cite{W} with an analysis of the XCTS system, by studying a familiar problem and introducing the important related issue of the stability of the solutions obtained. The ultimate goal of this program is to better understand the mapping between general free initial data and solutions of the constraint equations in their various conformal formulations. See also the recent work of Holst and Meier \cite{HM} in this regard.

In section 2 we study the solution branches of the constant density star found in \cite{BOMP}, \cite{W} and \cite{Holst}. We prove that the lower branch of weak-field solutions is stable whereas the upper branch is unstable. (We also prove that the kernel of the linearisation is one-dimensional, which was shown numerically in \cite{Holst}). In section 3 we follow \cite{Kielhofer} in applying Liapunov-Schmidt methods, as in \cite{W}, to the stability analysis and find that the exchange of stability that occurs at the critical point of a  fold is in fact generic under mild non-degeneracy conditions. The existence of unstable solutions to the constraint equations is interesting in itself because the most popular method for proving existence, the sub-super solution monotone convergence method, only yields stable solutions (a result proven by Sattinger in \cite{Sattinger1}).

\section{Stability of solutions}

%The new ingredient that allows us to prove the existence of a kernel is that of a proper map.

%
%\begin{definition} A map F from X to Y is proper if $F^{-1}(M)$ is compact in X for all compact sets M contained in Y-see Ambrosetti and Prodi \cite{AP}.
%\end{definition}

%
%In \cite{W} a toy model examining the multiplicity of solutions of the Hamiltonian constraint with an unscaled source was studied. It was assumed that the linearisation of this equation developed a  kernel for sufficiently large energy density $\rho$. In this work we prove that the kernel of the linearisation of this equation is non-empty.

%The remarkable construction of the kernel solution to the linearisation of the Lichnerowicz-York equation (equation (\ref{nonlinear}) below)  in the constant $\rho$ case in \cite{BOMP} relied on the fact that the solution of the nonlinear problem (\ref{nonlinear}) was known to be the Sobolev functions, a one parameter family of solutions. Without this knowledge in the case of general $\rho$ we are unable to obtain the kernel solution but we can still prove one exists. We do this by using the ``global inversion theorem" in \cite{AP}, Theorem 3.1.8. (see also Theorem 3.9 in \cite{CH}):

%\begin{theorem}$\newline$
%Suppose X and Y are Banach spaces and $F\in C^1(X,Y)$ is proper and locally invertible. Then F is a homeomorphism from X onto Y.

%\end{theorem}

We shall be concerned with the Hamiltonian constraint, the Lichnerowicz-York equation, at moment of time symmetry, for a conformally-flat background metric and an unscaled constant energy density $\rho$:
\begin{equation}\label{nonlinear}
F(\phi,\rho):=\nabla^2\phi+2\pi\rho\phi^5=0
\end{equation}
 with $\phi>0$.
 We work in spherical symmetry in this section so that $\nabla^2=\frac{d^2}{d r^2}+\frac{2}{r}\frac{d}{dr}$. For simplicity we assume that $\rho$ is zero outside a  ball of radius $r =  1$ with boundary conditions $\phi(1)=1$, and $\frac{d\phi}{d r}(0)=0$ as in \cite{Holst}. %Taking $\phi=1+u$ we rewrite (\ref{nonlinear}) as

We are interested in the stability of stationary solutions of the following parabolic problem:
\begin{equation}\label{para}
\frac{\partial\nu}{\partial t}=\nabla^2\nu+2\pi\rho\nu^5,
\end{equation}
with $\nu(t=0,x)=\phi_0(x)$.

We are interested in whether, for initial data $\phi_0(x)$ close to the stationary solution, the solution to this parabolic problem tends to the stationary solution as $t\rightarrow \infty$. To motivate the definition of stability that follows we take
$$\nu=\hat{\nu} + \epsilon \theta_1$$ where $\hat{\nu}$ is the stationary solution and $\theta_1$ is the eigenfunction corresponding to the principal eigenvalue $\mu_1$ (the smallest eigenvalue) and $\epsilon$ is a small constant. Then substituting this into (\ref{para}) gives
$$\frac{\partial\theta_1}{\partial t}=\nabla^2\theta_1+10\pi\rho\phi^4\theta_1=-\mu_1\theta_1$$ to first order in $\epsilon$.

\begin{definition}
 We will say that a stationary solution to (\ref{para}) is stable if the principal eigenvalue of the linearisation 

\begin{equation}\label{lin}
\nabla^2\theta_k+10\pi\rho\phi^4\theta_k=-\mu_k\theta_k,
\end{equation}
with $\theta_k(1)=0$ and $\theta_k'(0)=0$ satisfies $\mu_1(\nabla^2+10\pi\rho\phi^4)>0$ and otherwise that it is unstable.
\end{definition}

In the next Lemma we collect some facts about the eigenfunctions and eigenvalues of (\ref{lin}).

\begin{lemma}$\newline$
\begin{enumerate}
\item{All the eigenfunctions of (\ref{lin}) are orthogonal and all eigenfunctions except $\theta_1$ have nodes. Furthermore, the principal eigenvalue is simple i.e. if f is any solution of 
\begin{equation}
\nabla^2f+10\pi\rho\phi^4f=-\mu_1f,
\end{equation}
then f is proportional to $\theta_1$}
\item{The principal eigenvalue has a variational characterisation given by the Rayleigh quotient:}
\begin{equation}\label{ray}
\mu_1(\nabla^2+a(x))=min_{\eta\neq0\in D}\frac{\int|\nabla\eta|^2-a(x)\eta^2dv}{\int\eta^2dv}
\end{equation}
where D is the set of smooth functions satisfying the boundary conditions $\eta(1)=0,$ $\frac{d\eta}{d r}(0)=0$

\item{If $a(x)\geq b(x)$ with $a(x)>b(x)$ in a subset of positive measure then
$$\mu_1(\nabla^2+a(x))<\mu_1(\nabla^2+b(x))$$} 

\item{ Similarly, if $a(x)\leq b(x)$ with $a(x)<b(x)$ in a subset of positive measure then
$$\mu_1(\nabla^2+a(x))>\mu_1(\nabla^2+b(x)).$$}

\end{enumerate}
\end{lemma}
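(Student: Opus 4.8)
The plan is to treat all four parts as consequences of the self-adjoint, singular Sturm--Liouville structure of the radial operator. First I would rewrite the Laplacian in divergence form, $\nabla^2\eta = r^{-2}\frac{d}{dr}\!\big(r^2\frac{d\eta}{dr}\big)$, so that $L:=\nabla^2+a(x)$ with $a(x)=10\pi\rho\phi^4$ is formally symmetric on $L^2([0,1],r^2\,dr)$. Integrating $\int_0^1(\eta L\zeta - \zeta L\eta)\,r^2\,dr$ by parts produces the boundary term $[r^2(\eta\zeta'-\zeta\eta')]_0^1$, which vanishes at $r=1$ because every admissible function satisfies $\eta(1)=0$, and vanishes at $r=0$ because $r^2\eta'\to 0$ for functions smooth at the origin with $\eta'(0)=0$. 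Hence $L$ is symmetric and eigenfunctions belonging to distinct eigenvalues are orthogonal in the weighted inner product, which gives the orthogonality claim in Part 1.

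For Part 2 I would obtain the principal eigenvalue by the direct method: minimise the Rayleigh quotient (\ref{ray}) over the admissible class $D$. Boundedness below of the quotient (the potential $a$ is bounded, since $\phi$ is a smooth positive solution), together with a Rellich-type compactness argument in the weighted space, yields a minimiser; its Euler--Lagrange equation is precisely (\ref{lin}) with $k=1$, and integration by parts recovers the natural boundary condition $\eta'(0)=0$ while the essential condition $\eta(1)=0$ is built into $D$. This identifies the minimum value with $\mu_1$.

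The crux of the argument is the positivity and simplicity of the principal eigenfunction, which I would establish next and from which the remaining assertions of Part 1 follow. Since the Rayleigh quotient is unchanged under $\eta\mapsto|\eta|$, a minimiser may be taken nonnegative; the strong maximum principle (equivalently Harnack's inequality) applied to (\ref{lin}) then forces $\theta_1>0$ on $[0,1)$. Simplicity follows because any principal eigenfunction is itself a minimiser and hence of one sign; two linearly independent such eigenfunctions could be combined into a sign-changing minimiser, contradicting the strict positivity just shown. Finally, any higher eigenfunction $\theta_k$ is orthogonal to the positive $\theta_1$, so $\int_0^1\theta_1\theta_k\,r^2\,dr=0$ forces $\theta_k$ to change sign and therefore to possess a node.

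Parts 3 and 4 are then immediate from (\ref{ray}): if $a(x)\ge b(x)$ pointwise, then $R_a[\eta]\le R_b[\eta]$ for every admissible $\eta$, where $R_c[\eta]$ denotes the quotient with potential $c$, so $\mu_1(\nabla^2+a)\le\mu_1(\nabla^2+b)$. To upgrade this to strict inequality, evaluate $R_a$ at the (strictly positive) minimiser $\eta_b$ for the $b$-problem: $\mu_1(\nabla^2+a)\le R_a[\eta_b]=R_b[\eta_b]-\big(\int(a-b)\eta_b^2\,dv\big)\big/\int\eta_b^2\,dv<\mu_1(\nabla^2+b)$, where strictness uses that $a>b$ on a set of positive measure and $\eta_b>0$. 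Part 4 is the same computation with the roles reversed. I expect the main technical obstacle to be the careful handling of the coordinate singularity at $r=0$ --- verifying the vanishing of the weighted boundary term and the applicability of the maximum principle there --- rather than any of the variational steps, which are routine once the weighted self-adjoint framework is in place.
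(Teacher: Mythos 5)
Your argument is correct and is essentially the paper's own: the paper proves orthogonality by exactly your Green's-identity computation (written in 3D divergence form rather than your radial Sturm--Liouville form, but with the same vanishing boundary terms), and for everything else --- the variational characterisation, the definite sign of $\theta_1$, and the simplicity of $\mu_1$ --- it simply cites Evans, whose standard proofs (minimisation of the Rayleigh quotient, replacing a minimiser by $|\eta|$, the strong maximum principle, and sign-definiteness forcing simplicity and forcing nodes in higher eigenfunctions via orthogonality) are precisely what you reconstruct, with parts 3 and 4 declared to ``follow directly.'' Your one substantive addition is a genuine improvement in completeness rather than a different route: the strict inequality in parts 3--4, which the paper leaves implicit, does require evaluating the Rayleigh quotient for one potential at the strictly positive principal eigenfunction of the other (so that $\int (a-b)\eta_b^2\,dv>0$ on the positive-measure set where $a>b$), exactly as you carry out.
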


\begin{proof}

1.  
It is easy to show that the eigenfunctions of (\ref{lin}) are orthogonal:
\begin{align*}
(\mu_k-\mu_j)\int_V \theta_k\theta_j dV & = \int_V \left(\theta_k( \nabla^2\theta_j+10\pi\rho\phi^4\theta_j)-\theta_j(\nabla^2\theta_k+10\pi\rho\phi^4\theta_k )\right)dV \\
&= \int _{\partial V}
(\theta_k\nabla \theta_j-\theta_j\nabla \theta_k).ndS\\
&=0
\end{align*}
 and we have used Green's second identity with $n$ the outward pointing normal to the surface element $dS$ at $r=1$. We know that the eigenfunction corresponding to the first eigenvalue $\mu_1$ is of definite sign so may be taken to be positive (see\cite{Evans}). The orthogonality relation above then says that all higher eigenfunctions must have nodes. The proof that $\mu_1$ is simple may be found in \cite{Evans} for example.

2.  This well known result may be found for example in (\cite{Evans}).

3. and 4. follow directly from the above result.

QED
\end{proof}

In \cite{BOMP} the authors worked in an unbounded domain which complicates the analysis of the spectrum of the linearisation. We work in a  bounded domain corresponding to a   ball of radius $r=1$ with $\phi(1)=1$. It is straightforward, using the ideas in \cite{BOMP}, to generate a  one parameter family of solutions with this boundary condition:

\begin{equation}
\phi(r;\alpha)=\left(\frac{3}{2\pi\rho(\alpha)}\right)^{\frac{1}{4}}\left(\frac{\alpha}{r^2+\alpha^2}\right)^{\frac{1}{2}} 
\end{equation}
where the boundary condition gives
\begin{equation}
\rho(\alpha)=\frac{3}{2\pi}\left(\frac{\alpha}{1+\alpha^2}\right)^2
\end{equation}

Note that the maximum value of $\rho:=\rho_c$ occurs at $\alpha_c=1$ and that the limit $\alpha\rightarrow \infty$ corresponds to a lower branch of solutions and the limit $\alpha\rightarrow 0$ corresponds to an upper branch of solutions.

It is easy to check that 
\begin{equation}
\theta_k(r)=\frac{1-r^2}{(1+r^2)^{\frac{3}{2}}}
\end{equation}

satisfies (\ref{lin}) with $\alpha=1$ in $\rho$ and $\mu_k=0$. This eigenfunction has no nodes and  therefore corresponds to the principle eigenvalue found numerically in \cite{Holst}. So we have that $\mu_1=0$, we have found the principal eigenfunction at the critical value $\rho_c$, $\alpha_c=1$. Since this eigenvalue is simple we have that the kernel of the linearisation at the critical energy density $\rho_c$ is one dimensional.

In \cite{BOMP}, \cite{Holst} and above, global branches of solutions were found corresponding to a quadratic fold. For each $\rho<\rho_c$ we have an upper solution $\phi_U$ and a  lower solution $\phi_L$. (As in \cite{BOMP}, a simple calculation reveals that there are apparent horizon's on the upper branch of solutions at the location $r=\alpha$, and none on the lower branch). 

Before showing that the upper solutions $\phi_U$ are unstable and the lower solutions $\phi_L$ are stable, we motivate our analysis by the following observation.

If we multiply the linearisation (\ref{lin}) by $u:=\phi-1$ and integrate twice by parts we get the following identity, having used (\ref{nonlinear}):
$$2\pi\int\left(\rho\phi^4\theta_1(4u-1)\right)dv=-\mu_1\int\theta_1udv.$$

With $\theta_1$ the first eigenfunction and therefore positive and $u\geq 0$ since $\phi \geq 1$ we see that for weak-field solutions
$(4u-1)$ is likely less than zero so that $\mu_1>0$ and the solution is stable. But we can imagine a strong field solution whereby $(4u-1)>0$ so that $\mu_1<0$ and we have an unstable solution.

\begin{theorem}$\newline$
1. The lower branch of solutions, i.e. $\phi(r;\alpha)$ with $1<\alpha<\infty$, is stable.

2. The upper branch of solutions, i.e. $\phi(r;\alpha)$ with $0<\alpha<1$, is unstable.

\end{theorem}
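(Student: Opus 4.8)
The plan is to use the Rayleigh quotient characterisation of $\mu_1$ from part 2 of the Lemma together with the explicit one-parameter family $\phi(r;\alpha)$. The central object is the potential $a(x)=10\pi\rho\phi^4$ appearing in the linearisation (\ref{lin}). Using the explicit formulas, I would first compute
\begin{equation*}
10\pi\rho(\alpha)\,\phi(r;\alpha)^4 = 10\pi\rho(\alpha)\cdot\frac{3}{2\pi\rho(\alpha)}\cdot\frac{\alpha^2}{(r^2+\alpha^2)^2} = \frac{15\,\alpha^2}{(r^2+\alpha^2)^2},
\end{equation*}
so that the $\rho(\alpha)$ prefactor cancels and the potential depends on $\alpha$ only through the explicit expression $a_\alpha(r):=15\alpha^2/(r^2+\alpha^2)^2$. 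This is the key simplification: monotonicity in $\alpha$ of the potential translates, via parts 3 and 4 of the Lemma, into monotonicity of $\mu_1(\nabla^2+a_\alpha)$.

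Next I would exploit the known fact, established just above the theorem, that at the critical parameter $\alpha_c=1$ the principal eigenvalue vanishes, $\mu_1(\nabla^2+a_1)=0$, with principal eigenfunction $\theta_1(r)=(1-r^2)/(1+r^2)^{3/2}$. The strategy is then a comparison argument: show that for the lower branch ($\alpha>1$) one has $a_\alpha(r)\le a_1(r)$ pointwise (with strict inequality on a set of positive measure), whence by part 4 of the Lemma $\mu_1(\nabla^2+a_\alpha)>\mu_1(\nabla^2+a_1)=0$ and the solution is stable; and that for the upper branch ($0<\alpha<1$) one has $a_\alpha(r)\ge a_1(r)$ with strict inequality somewhere, whence by part 3 of the Lemma $\mu_1(\nabla^2+a_\alpha)<0$ and the solution is unstable.

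The main obstacle is that the pointwise comparison between $a_\alpha$ and $a_1$ is \emph{not} uniform in $r$: examining $\partial a_\alpha/\partial\alpha$ shows that as $\alpha$ varies the potential increases in some region of $r$ and decreases in another (the curves $a_\alpha(r)$ for different $\alpha$ cross), so parts 3 and 4 of the Lemma cannot be applied directly to the raw potential. To handle this I would instead work directly with the Rayleigh quotient (\ref{ray}), differentiating $\mu_1(\alpha):=\mu_1(\nabla^2+a_\alpha)$ with respect to $\alpha$. Since the principal eigenvalue is simple (part 1 of the Lemma), standard eigenvalue perturbation gives
\begin{equation*}
\frac{d\mu_1}{d\alpha} = -\frac{\int \frac{\partial a_\alpha}{\partial\alpha}\,\theta_1^2\,dv}{\int \theta_1^2\,dv},
\end{equation*}
evaluated along the branch. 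The task reduces to determining the sign of $\int (\partial_\alpha a_\alpha)\,\theta_1^2\,dv$ at $\alpha=1$ (and more generally along each branch), which is a concrete weighted integral of an explicit rational function against $\theta_1^2$. If this integral has the appropriate sign—negative derivative as $\alpha$ increases through $1$—then $\mu_1(\alpha)$ passes through zero at $\alpha_c=1$ transversally, crossing from $\mu_1<0$ on the upper branch ($\alpha<1$) to $\mu_1>0$ on the lower branch ($\alpha>1$), which is exactly the exchange of stability asserted. I expect the sign computation of this single weighted integral, together with a monotonicity argument ensuring $\mu_1$ does not change sign again away from $\alpha=1$, to be the crux; the motivating identity $2\pi\int\rho\phi^4\theta_1(4u-1)\,dv=-\mu_1\int\theta_1 u\,dv$ derived above gives an independent heuristic check on the sign, since $u=\phi-1$ grows on the strong-field (upper) branch where $4u-1>0$ forces $\mu_1<0$.
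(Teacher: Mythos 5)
There is a genuine gap, and also a point where you discard a working argument. First, the lower branch: your worry that ``the curves cross'' is unfounded for $\alpha>1$. Computing the difference of potentials explicitly,
\begin{equation*}
a_\alpha(r)-a_1(r)=\frac{15\,(\alpha^2-1)(r^4-\alpha^2)}{(r^2+1)^2(r^2+\alpha^2)^2},
\end{equation*}
one sees that the crossing point $r=\sqrt{\alpha}$ lies \emph{outside} the domain $[0,1]$ when $\alpha>1$: since $r^4\le 1<\alpha^2$, the difference is strictly negative for all $r\in[0,1)$. Hence part 4 of the Lemma applies directly and gives $\mu_1(\nabla^2+a_\alpha)>\mu_1(\nabla^2+a_1)=0$ for every $\alpha>1$ --- this is precisely the paper's proof of part 1, and it is global, with no perturbation theory needed. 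The sign-change of $\partial a_\alpha/\partial\alpha$ at $r=\alpha$ only obstructs the comparison on the upper branch $0<\alpha<1$ (where $r=\sqrt{\alpha}\in(0,1)$), which is exactly where the paper also abandons the comparison test.

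Second, and more seriously, your Hellmann--Feynman argument proves less than the theorem asserts. The derivative formula at $\alpha=1$, where $\partial_\alpha a_\alpha=30(r^2-1)/(r^2+1)^3\le 0$ on $[0,1]$, does give $\dot\mu_1(1)>0$ and hence a transversal crossing: $\mu_1<0$ for $\alpha$ \emph{slightly} below $1$ and $\mu_1>0$ slightly above. But this is a local exchange-of-stability statement --- it is essentially the content of the paper's Section 3 (the Liapunov--Schmidt fold analysis), not of the theorem, which claims instability for \emph{all} $0<\alpha<1$. To extend your argument globally you would need the sign of $\int(\partial_\alpha a_\alpha)\,\theta_1(\cdot\,;\alpha)^2\,dv$ for every $\alpha\in(0,1)$, where $\theta_1(\cdot\,;\alpha)$ is the (unknown) principal eigenfunction at parameter $\alpha$ and the integrand has indefinite sign (positive for $r>\alpha$, negative for $r<\alpha$); you name this ``monotonicity argument'' as the crux but do not supply it, and it is not obviously available. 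The paper instead closes the global claim variationally: it inserts the explicit trial function $\eta_T=(1-r^2)/(\alpha^2+r^2)^{3/2}$ into the Rayleigh quotient, evaluates $R(\alpha)$ in closed form, and observes $R(\alpha)<0$ on all of $(0,1)$, whence $\mu_1(\alpha)\le R(\alpha)<0$ for every point of the upper branch. Some one-shot upper bound of this kind (a trial function valid uniformly in $\alpha$), rather than a derivative at the critical point, is what your proposal is missing.
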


\begin{proof}
1. We know that $\mu_1(\nabla^2+10\pi\rho_c\phi_c^4)=0$. Motivated by this we examine the difference in the ``potentials'' in the linearisation (\ref{lin})

\begin{align}\label{diff}
%\begin{align}
\rho(\alpha)\phi(\alpha)^4-\rho_c\phi_c^4 &=
\frac{3}{2\pi}\left(\frac{\alpha^2}{(r^2+\alpha^2)^2}-\frac{1}{(r^2+1)^2}\right)
=\frac{3}{2\pi}\left(\frac{(r^4-\alpha^2)(\alpha^2-1)}{(r^2+1)^2(r^2+\alpha^2)^2}\right)
%\end{align}
\end{align} 

Note that $r\in[0,1]$. On the lower branch of solutions $1<\alpha<\infty$, so that $\rho(\alpha)\phi(\alpha)^4-\rho_c\phi_c^4<0$, so Lemma 2.1.3 implies that

\begin{equation}
\mu_1(\nabla^2+10\pi\rho(\alpha)\phi(\alpha)^4)>\mu_1(\nabla^2+10\pi\rho_c\phi_c^4)=0
\end{equation}

so the lower branch of solutions is stable.

2. We now proceed to show that the upper branch of solutions is unstable. The comparison test above yields an indefinite result so we cannot implement Lemma 2.1.4.

Instead we utilise the fact that
\begin{align}\label{ray}
\mu_1(\nabla^2+10\pi\rho\phi^4)&=min_{\eta\neq0\in D}\frac{\int|\nabla\eta|^2-10\pi\rho\phi^4\eta^2dv}{\int\eta^2dv}\\
&=min_{\eta\neq0\in D}\frac{\int|\nabla\eta|^2-15\left(\frac{\alpha}{\alpha^2+r^2}\right)^2\eta^2dv}{\int\eta^2dv}.
\end{align}

Using a trial function $\eta_T\in D$ then provides an upper bound for the principal eigenvalue corresponding to a particular choice of $0<\alpha<1$, i.e. $\mu_1(\alpha)\leq R(\alpha)$ where $R(\alpha)$ is the Rayleigh quotient above. 

We shall choose 
\begin{equation}\label{eta}
\eta_T=\frac{1-r^2}{(\alpha^2+r^2)^\frac{3}{2}}.
\end{equation}

and use MATLAB to perform the integration exactly. This yields a rather complicated expression given in Appendix A. A graph of $R(\alpha)$ is shown in Figure ~\ref{fig 1} and some explicit values for $R(\alpha)$ are given in Table 1 in Appendix A. We clearly see that for $0<\alpha<1$, the upper branch of solutions, that $\mu_1(\alpha_{upper})\leq R(\alpha_{upper})<0$ so we have that the upper branch of solutions is unstable.

QED
\end{proof}
\begin{figure}
\centerline{\includegraphics[width=12cm,height=8cm]{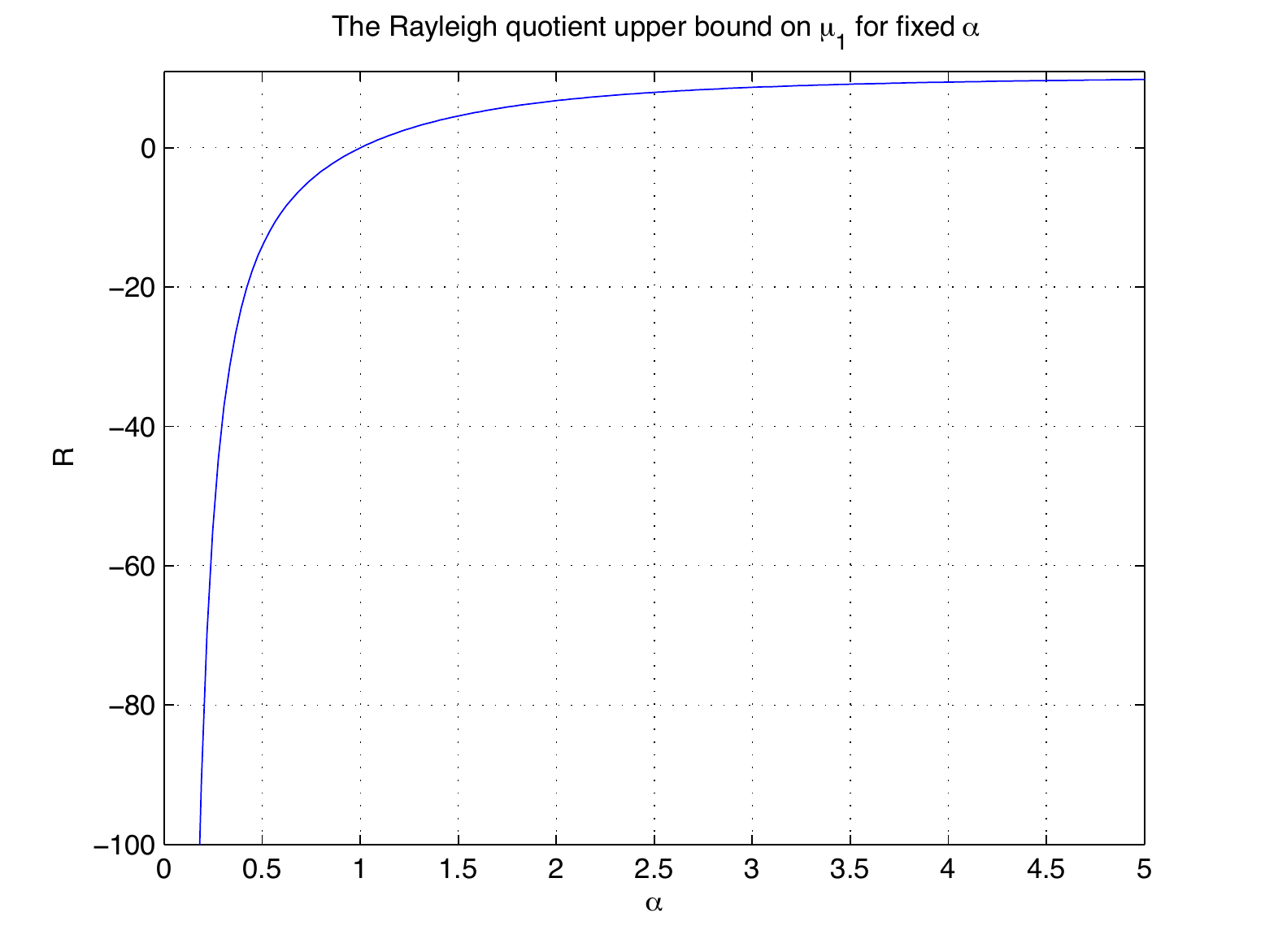}}
\caption{The Rayleigh quotient as a function of $\alpha$ for the trial function $\eta_T$}
\label{fig 1}
\end{figure}

%We conclude this section by mentioning a limitation on the application of the sub-super solution method that this work reveals. Solutions to elliptic equations obtained via the sub-super solution method are stable in the sense described above (see \cite{Sattinger}, \cite{Sattinger1}) so that the upper branch of solutions  studied here are unattainable by this method. The general CMC Lichnerowicz-York equation with an unscaled source on an asymptotically Euclidean manifold
%\begin{equation}
%\nabla^2\phi-r\phi+a\phi^{-7}+2\pi\rho\phi^5=0
%\end{equation}
%was studied in \cite{CBIY} using the sub-super solution method, where $r$ is proportional to the scalar curvature and $a$ the traceless part of the conformally transformed extrinsic curvature squared (see section XII of that work). They found an open set of values of $a$ and $\rho$ such that existence could be proven-uniqueness was not proven. 

%If, as seems likely from this work and \cite{PY}, \cite{BOMP}, \cite{W} and \cite{Holst}, that upper and lower branches of solutions exist for this equation for some combination of $a$ and $\rho$ then the sub-super solution and monotone convergence method will only converge to a stable solution. So if an upper branch of solutions is unstable, as above, then the sub-super solution method will not yield it. These comments are also relevant to the Einstein-Scalar field CMC Lichnerowicz-York equation as studied in \cite{CBIP} (see Theorem 8.8 in that work).

\section{Stability of solutions via Liapunov-Schmidt reduction}

Liapunov-Schmidt (LS) methods were applied to the Lichnerowicz-York equation with an unscaled source in \cite{W} and more recently in \cite{HM}. Here we follow \cite{Kielhofer} and show that locally the exchange of stability witnessed in the previous section is in fact generic for quadratic fold-type branches.

We again consider a nonlinear suitably smooth equation $F(x,\rho)=0$ with $x\in X$, a Banach space and $\rho \in \mathbb{R}$, $(F: X * \mathbb{R}\rightarrow Z)$ with $X\subset Z$
and assume that zero is a  simple eigenvalue of $D_xF(x_0, \rho_0)$. We let $\theta$ denote the one-dimensional kernel and denote by $\theta^*$ the cokernel of $D_{x}F(x_0,\rho_0)$ normalised such that $\int\theta\theta^*=1$ (we assume the linearisation has a Fredholm index of zero). The LS construction allows us to construct a continuously differentiable curve of solutions, for small positive $\delta$, $\{(x(s),\rho(s))| s\in(-\delta,\delta),\ \  (x(0),\rho(0))=(x_c,\rho_c)\}$, such that  
$F(x(s),\rho(s))=0$ for all $s\in(-\delta,\delta)$.

For the LS projection operators and splittings of the range and domain we follow the notation of \cite{Kielhofer} section 1.4-1.7, which is equivalent to \cite{W}. In particular
we have that
\begin{align}
Z&=R(D_xF(x_c, \rho_c))\oplus N(D_xF(x_c, \rho_c))\\
X&=N\oplus(R\cap X),
\end{align}

where N denotes the kernel space and R the range of the linearisation.

Note that if the following nondegeneracy conditions hold:
\begin{equation}\label{nondegen1}
D_{xx}F(x_c,\rho_c)[\theta,\theta] \notin R(D_xF(x_c,\rho_c))
\end{equation}
\begin{equation}\label{nondegen2}
D_\rho F(x_c,\rho_c)\notin R(D_xF(x_c,\rho_c))
\end{equation}

then it is proven in \cite{Kielhofer} that

\begin{equation}
\dot{\rho}(0)=0, \ \ \ \ \ddot{\rho}(0)\neq 0
\end{equation}
where a dot denotes differentiation with respect to s,
and so the tangent vector to the solution curve $(x(s),\rho(s))$ at $(x_c,\rho_c)$ is $(\theta,0)$
(see corollary 1.4.2 in \cite{Kielhofer}) and we have a  turning point or fold (sometimes called a 'saddle-node').

To analyse the stability of the 'branches' of the fold (upper and lower) we note the following result (Proposition 1.7.2 in \cite{Kielhofer}):

\begin{theorem}$\newline$
There exists a  continuously differentiable curve of  perturbed eigenvalues $\{\mu(s)| s\in (-\delta,\delta)\  \ \mu(0)=0\}$ in $\mathbb{R}$ such that
$$D_xF(x(s), \rho(s))(\theta+w(s))=\mu(s)(\theta+w(s)),$$

where $\{w(s)| s\in (-\delta,\delta), \ \ w(0)=0\} \subset R\cap X$ is continuously differentiable (and the size of the interval $(-\delta,\delta)$ is not necessarily the same as in the solution curve above, but possible shrunk). In this sense, $\mu(s)$ is the perturbation of the critical zero eigenvalue of $D_xF(x_c, \rho_c)$.

\end{theorem}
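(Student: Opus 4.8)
The plan is to treat this as a standard perturbation-of-a-simple-eigenvalue result and to prove it by applying the implicit function theorem to a suitably normalised eigenvalue equation. The key structural input is the assumption that zero is a simple eigenvalue of $D_xF(x_c,\rho_c)$: this gives the one-dimensional kernel $N=\mathrm{span}\{\theta\}$, a cokernel spanned by $\theta^*$ normalised so that $\int\theta\theta^*=1$, and the splitting $Z=R\oplus N$ with $R=R(D_xF(x_c,\rho_c))$. Writing the perturbed eigenfunction in the form $\theta+w$ with $w\in R\cap X$ is precisely the normalisation that removes the scaling freedom of an eigenvector, since it forbids $w$ from having any component along $\theta$.

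First I would introduce the map
$$G:(-\delta,\delta)\times(R\cap X)\times\mathbb{R}\to Z,\qquad G(s,w,\mu)=D_xF(x(s),\rho(s))(\theta+w)-\mu(\theta+w).$$
Because $\theta\in N$ we have $G(0,0,0)=D_xF(x_c,\rho_c)\theta=0$, so $(s,w,\mu)=(0,0,0)$ is a solution and provides the base point. Since $F$ is suitably smooth and $s\mapsto(x(s),\rho(s))$ is the continuously differentiable solution curve produced by the Liapunov-Schmidt construction, the operator family $s\mapsto D_xF(x(s),\rho(s))$ is $C^1$ in $s$, and hence $G$ is $C^1$ in all of its arguments.

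The crucial step is to compute the partial derivative of $G$ in $(w,\mu)$ at the base point and to verify that it is a bounded linear isomorphism. This derivative is
$$D_{(w,\mu)}G(0,0,0)[\hat w,\hat\mu]=D_xF(x_c,\rho_c)\hat w-\hat\mu\,\theta,$$
viewed as a map $(R\cap X)\times\mathbb{R}\to Z$. To invert it I would use the splitting $Z=R\oplus N$ and the projection $Qz=\bigl(\int z\,\theta^*\bigr)\theta$ onto $N$ along $R$. Given $z\in Z$, projecting onto $N$ forces $\hat\mu=-\int z\,\theta^*$, because $D_xF(x_c,\rho_c)\hat w\in R$ contributes nothing to the $N$-component; projecting onto $R$ then leaves $D_xF(x_c,\rho_c)\hat w=z-\bigl(\int z\,\theta^*\bigr)\theta\in R$. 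The restriction of $D_xF(x_c,\rho_c)$ to $R\cap X$ is a bijection onto $R$ (injective, since its kernel $N$ meets $R\cap X$ only in $0$, and surjective by the definition of $R$), so $\hat w$ is uniquely determined. Hence the linearisation is an isomorphism.

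With the isomorphism established, the implicit function theorem yields a possibly smaller interval $(-\delta,\delta)$ and unique $C^1$ curves $s\mapsto w(s)\in R\cap X$ and $s\mapsto\mu(s)\in\mathbb{R}$ with $w(0)=0$, $\mu(0)=0$ and $G(s,w(s),\mu(s))=0$, which is exactly the asserted relation $D_xF(x(s),\rho(s))(\theta+w(s))=\mu(s)(\theta+w(s))$. I expect the only genuine obstacle to be the verification that the linearised map is an isomorphism: everything there hinges on the simplicity of the eigenvalue (one-dimensional kernel and cokernel, Fredholm index zero) and on the normalisation $\int\theta\theta^*=1$, which is precisely what makes the $\hat\mu\,\theta$ term fill the missing $N$-direction. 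The smoothness and continuity bookkeeping for $s\mapsto D_xF(x(s),\rho(s))$ is routine given the standing regularity hypotheses on $F$.
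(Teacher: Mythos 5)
Your proof is correct, and it matches the source: the paper does not prove this statement itself but quotes it as Proposition 1.7.2 of Kielh\"ofer, where the argument is exactly your implicit-function-theorem scheme — showing the bordered map $(\hat w,\hat\mu)\mapsto D_xF(x_c,\rho_c)\hat w-\hat\mu\,\theta$ is an isomorphism from $(R\cap X)\times\mathbb{R}$ onto $Z=R\oplus N$ using the projection $Qz=\bigl(\int z\,\theta^*\bigr)\theta$, then solving $G(s,w,\mu)=0$ near $(0,0,0)$. The only bookkeeping worth stating explicitly is that closedness of $R$ (from the Fredholm index-zero assumption) is what makes $Q$ bounded and gives the restriction of $D_xF(x_c,\rho_c)$ to $R\cap X$ a bounded inverse via the open mapping theorem, and that the standing ``suitably smooth'' hypothesis (in effect $F\in C^2$) is what makes $G$ of class $C^1$ in $s$; both are covered by the paper's assumptions.
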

 (Note that we used the opposite sign convection for the eigenvalues of the linearisation in section 2  where $D_{x}F(x_c,\rho_c)\theta=-\mu\theta$).
 
 It is then straightforward to show that (see 1.7.30 in \cite{Kielhofer}),
 \begin{equation}\label{mudot}
 \dot{\mu}(0)=\int D_{xx}F(x_c,\rho_c)[\theta,\theta] \theta^*\neq 0
 \end{equation}
and
\begin{equation}\label{rhoddot}
\dot{\mu}(0)=-\left(\int D_\rho F(x_c,\rho_c)\theta^*\right)\ddot{\rho}(0).
\end{equation}

Now $\mu(0)=0$, $\dot{\mu}(0)\neq 0$ implies that $\mu(s)$ changes sign at $s=0$ so that the stability of the solution curve changes at the turning point.

Applying this to the constant density problem of section 2 we see that for small $s\in(-\delta,\delta)$ (and with $\alpha^2$ and $\gamma^2$ two known constants obtained from (\ref{mudot}) and (\ref{rhoddot})) 
$$\phi(s)=\phi_c+s\dot{\phi}(0)+O(s^2)=\phi_c+s\theta+O(s^2)$$
$$\rho(s)=\rho_c+s\dot{\rho}(0)+\frac{s^2}{2}\ddot{\rho}(0)=\rho_c-\gamma^2s^2+O(s^3)$$ which agrees with \cite{W} and section 2 above, and
$$\mu(s)=\mu(0)+s\dot{\mu}(0)+O(s^2)=\alpha^2s+O(s^2)$$ which tells us that the lower branch of solutions $s\in(-\delta,0)$ is stable and the upper branch $s\in(0,\delta)$ is unstable, as expected.

The solution branches for the conformal factor, lapse and shift vector found numerically in \cite{PY} for the XCTS formulation of the constraints are graphically similar to those studied above (LS methods were applied to this system in \cite{W} under the assumption that the system developed a  one dimensional kernel for sufficiently large initial data). It seems likely from this work that these branches also display an exchange of stability for a  broad class of initial data such that the non-degeneracy conditions (\ref{nondegen1}), (\ref{nondegen2}) are satisfied.

We conclude this section by mentioning a limitation on the application of the sub-super solution method that this work reveals. Solutions to elliptic equations obtained via the sub-super solution method are stable in the sense described in section 2 (see \cite{Sattinger}, \cite{Sattinger1}) so that the upper branch of solutions  studied here are unattainable by this method. The general CMC Lichnerowicz-York equation with an unscaled source on an asymptotically Euclidean manifold
\begin{equation}
\nabla^2\phi-r\phi+a\phi^{-7}+2\pi\rho\phi^5=0
\end{equation}
was studied in \cite{CBIY} using the sub-super solution method, where $r$ is proportional to the scalar curvature and $a$ the traceless part of the conformally transformed extrinsic curvature squared (see section XII of that work). They found an open set of values of $a$ and $\rho$ such that existence could be proven-uniqueness was not proven. 

If, as seems likely from this work and \cite{PY}, \cite{BOMP}, \cite{W},\cite{Holst} and \cite{HM}, that upper and lower branches of solutions exist for this equation for some combination of $a$ and $\rho$ then the sub-super solution and monotone iteration method will only converge to a stable solution. So if an upper branch of solutions is unstable, as above, then the sub-super solution method will not yield it. These comments are also relevant to the Einstein-Scalar field CMC Lichnerowicz-York equation as studied in \cite{CBIP} (see Theorem 8.8 in that work).

\section{Discussion}
In this work we have concentrated on the non-standard case of an unscaled fluid with no momentum. (For a  complete discussion of the role of conformal scaling of fluid sources see \cite{CB} or \cite{CBIY}). It is not of purely academic interest however because it serves as an excellent model of a  poorly scaled system such as the XCTS system and also warns of the dangers of not scaling the extrinsic curvature as was common in numerical evolutions which started from moment of time symmetry initial data, see section 5 in \cite{W}.

In studying (\ref{nonlinear}), we have been looking at the geometric problem of finding a  conformal factor that maps from a scalar flat metric to one with scalar curvature equal to $16\pi\rho$ (for more details see the classic paper \cite{KW} where the authors allow a  non-zero scalar curvature background metric). This is closely related to the Yamabe problem of finding a conformal factor that maps a given metric to one with constant scalar curvature. This has yielded many insights, particularly in relation to the existence theory of solutions to the Lichnerowicz-York equation, see \cite{BI} for an excellent review.

It is clearly important to understand the strengths and limitations of solution methods. With this in mind, future work should determine the stability of non-unique solutions found in other works, for example in \cite{PY}, \cite{Maxwell}.

The Lichnerowicz-York equation with an unscaled source also has a  variational formulation since (\ref{nonlinear}) is the Euler-Lagrange equation obtained by varying the following functional:

\begin{equation}
I[\phi]=\int\left( \frac{|\nabla\phi|^2}{2}-\frac{\pi\rho\phi^6}{3}\right)dv
\end{equation}

It would be interesting to check which solutions (stable/lower or unstable/upper or both) application of the Mountain Pass theorem to this toy model yields.

\section*{Acknowledgements}
I am grateful to Niall \'{O} Murchadha for helpful comments on an early draft of this work and an anonymous referee for suggestions that improved its presentation.

%\section*{Appendix A}

%The second order elliptic equation on a  bounded domain $\Omega$ with smooth boundary $\partial \Omega$
%\begin{equation}\label{Lu=h}
% Lu=h
% \end{equation} 
%with u=0 on $\partial\Omega$ 
% was studied in \cite{AP} and used in the proof of properness above. They state the following theorem (Theorem 0.5 in \cite{AP}):
%\begin{theorem}$\newline$
%a) If $h\in L^{\infty}(\Omega)$ then $u\in C^{1,\alpha}(\bar{\Omega})$ for any $0<\alpha<1$ and
%$$||u||_ {C^{1,\alpha}}\leq c||h||_{L^{\infty}}$$

%b) If $h\in C^{0,\alpha}(\bar{\Omega})$ then $u\in C^{2,\alpha}(\bar{\Omega})$ is a classical solution of (\ref{Lu=h}) and and
%$$||u||_ {C^{2,\alpha}}\leq c||h||_{C^{0,\alpha}}$$

%\end{theorem}

\section*{Appendix A}

We used MATLAB to perform the following integration for the Rayleigh quotient for the trial function $\eta_T$ given by (\ref{eta})

\begin{align}\label{ray}
\mu_1(\alpha)&\leq R(\alpha)\\ &=\frac{\int|\nabla\eta_T|^2-10\pi\rho\phi^4\eta_T^2dv}{\int\eta_T^2dv}\\
&=\frac{\int|\nabla\eta_T|^2-15\left(\frac{\alpha}{\alpha^2+r^2}\right)^2\eta_T^2dv}{\int\eta_T^2dv}\\
&=\frac{(3tan^{-1}(1/\alpha)\alpha^8-3\alpha^7+\alpha^5-6tan^{-1}(1/\alpha)\alpha^4-\alpha^3+3\alpha+3tan^{-1}(1/\alpha))}{     ((\alpha^4+2\alpha^2+1)\alpha^2(15tan^{-1}(1/\alpha)\alpha^4-15\alpha^3+6tan^-1(1/\alpha)\alpha^2-\alpha-tan^{-1}(1/\alpha)))}
\end{align}

Some specific values of the Rayleigh quotient are given in Table 1 below. 

\begin{table}[ht] \caption{ } \centering \begin{tabular}{c c c c} \hline\hline $\alpha$ & Rayleigh quotient  \\ [0.5ex]
 \hline 10,000&10.500 \\ 5&9.804  \\1&0\\0.99&-0.138\\0.75&-4.548\\0.5&-14.057\\0.25&-53.713\\0.01&-3.00x$10^4$\\ [1ex] \hline \end{tabular} \label{table:nonlin} \end{table}

\newpage

\end{document}